\newtheorem{lemma}{Lemma}
\newtheorem{corollary}{Corollary}
\newtheorem{definition}{Definition}
\newtheorem{claim}{Claim}
\newcommand{\F}{\mathbb{F}}
\begin{document}

\title{On The Number of Different Entries in Involutory MDS Matrices over Finite Fields of Characteristic Two}

\author{Muhammad Afifurrahman} 
 \email[Corresponding author: ]{m\_afifurrahman@students.itb.ac.id}
\affiliation{
  Algebra Research Group, Faculty of Mathematics and Natural Sciences, Bandung Institute of Technology
  (Jalan Ganesa 10 Bandung, Indonesia).
}

\date{\today} 

\begin{abstract}
Two of many criteria of a good MDS matrix are being involutory and having few different elements. This paper investigates the number of different entries in an involutory MDS matrices of order 1, 2, 3, and 4 over finite fields of characteristic two. There are at least three and four different elements in an involutory MDS matrices with, respectively, order three and four, over finite fields of characteristic two.

\textit{\textbf{Keyword}: MDS matrix, involutory matrix, cryptography, finite field}
\end{abstract}

\maketitle

\section{\label{sec:level1}Introduction}

MDS matrices, especially over finite fields of characteristic two, are widely used in cryptography for constructing block ciphers due to its diffusion properties. For an extensive survey, we consult Gupta et. al. \cite{gupta}.

Two of many criteria of a good MDS matrix are being involutory and having few different elements. Junod and Vaudenay \cite{junod} have found some lower bounds (with examples) on the numbers of different entries in MDS matrices over finite fields of characteristic two. In this paper, we extend this result to involutory MDS matrices.

This paper is organized as follows: second section introduces MDS matrices and some relevant results. Third section concerns the lower bounds (with examples) of the number of different entries in an involutory MDS matrices with order 1, 2, and 3 over finite fields of characteristic two. Fourth section concerns lower bounds (with examples) of the number of different entries in an involutory MDS matrices with order 4 over finite fields of characteristic two. The fifth section summarizes the results of this paper.
\section{Definition and Lemma}
Let $\F_{2^m}$ be the finite field with $2^m$ elements, and $m$ be a positive integer. We refer to $n\times n$ matrices as matrices of order $n$. A square matrix $A$ is involutory if $A^2=I$. Denote $A_{ij}$ as the entry of row $i$ and column $j$ in a matrix $A$.
\begin{definition}\cite{gupta}
		A matrix $A$ is \textit{MDS} (Maximum Distance Separable) if every square submatrices of $A$ are nonsingular.
\end{definition}



The following lemma, and its corollaries, play an important part in this paper.
\begin{lemma}\cite{gupta}
For any permutation matrices (with correct sizes) $P$ and $Q$, if $A$ is MDS, $PAQ$ and $A^T$ is also MDS.
\end{lemma}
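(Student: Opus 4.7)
The plan is to prove the two claims separately, starting with the easier transpose case. For $A^T$, I would observe that every square submatrix of $A^T$ is precisely the transpose of a square submatrix of $A$: if $I,J$ index a set of rows and columns, then the $(I,J)$-submatrix of $A^T$ equals the transpose of the $(J,I)$-submatrix of $A$. Since $\det(M^T)=\det(M)$, nonsingularity of all square submatrices of $A$ immediately implies nonsingularity of all square submatrices of $A^T$.

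For $PAQ$, the key observation is that left multiplication by a permutation matrix permutes rows and right multiplication by a permutation matrix permutes columns. Concretely, if $P$ corresponds to permutation $\sigma$ and $Q$ to permutation $\tau$, then $(PAQ)_{ij}=A_{\sigma^{-1}(i),\tau(j)}$ (up to convention). Hence selecting any row set $I$ and column set $J$ from $PAQ$ yields the same matrix (up to a permutation of rows and columns) as the submatrix of $A$ indexed by $\sigma^{-1}(I)$ and $\tau(J)$. Since permuting rows or columns only changes the determinant by a sign, nonsingularity is preserved, and thus every square submatrix of $PAQ$ is nonsingular.

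Combining these two, I would conclude $PAQ$ and $A^T$ are both MDS whenever $A$ is. The main obstacle is purely notational: setting up the index bookkeeping for the permutation case cleanly so it is clear that each square submatrix of $PAQ$ corresponds bijectively to a square submatrix of $A$ of the same size. Once that correspondence is in place, the determinant argument is immediate. No deep idea is required beyond the fact that determinants are invariant under transposition and change only by a sign (which is irrelevant for the nonsingularity question, and in characteristic two is not even a sign change) under row and column permutations.
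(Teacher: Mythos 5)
Your argument is correct: the paper itself gives no proof of this lemma (it is cited from Gupta et al.), and your reasoning---that submatrices of $A^T$ are transposes of submatrices of $A$, and that submatrices of $PAQ$ are row/column permutations of submatrices of $A$, with nonsingularity preserved under both operations---is the standard and complete justification. Nothing is missing.
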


Based on this, and owing to the fact that $P^{-1}$ is also a permutation matrix, the following corollaries are obvious.
\begin{corollary} 
\label{paap}
A matrix $A$ is involutory MDS if and only if $PAP^{-1}$ is involutory MDS, with $P$ being a permutation matrix.
\end{corollary}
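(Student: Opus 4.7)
The plan is to prove the two directions of the biconditional by checking the MDS property and the involutory property separately, each via a short computation built on the preceding Lemma.

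For the forward direction, assume $A$ is involutory MDS. The MDS part of $PAP^{-1}$ follows immediately by invoking the Lemma with $Q := P^{-1}$, using the fact (noted in the paragraph preceding the Corollary) that $P^{-1}$ is itself a permutation matrix. For the involutory part, I would just compute
\[
(PAP^{-1})^2 = PA(P^{-1}P)AP^{-1} = PA^2P^{-1} = PIP^{-1} = I,
\]
so $PAP^{-1}$ is involutory.

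For the converse, set $B := PAP^{-1}$ and observe that $A = P^{-1}B(P^{-1})^{-1}$. Since $P^{-1}$ is also a permutation matrix, the ``only if'' direction just established, applied to the pair $(B, P^{-1})$ in place of $(A, P)$, yields that $A$ itself is involutory MDS. This symmetric reformulation avoids duplicating the computation.

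There is no real obstacle here: the result is a formal consequence of the cited Lemma together with the cancellation $P^{-1}P = I$. The only point worth stating explicitly is that the class of permutation matrices is closed under inversion, which is precisely what allows the Lemma to be applied in the symmetric form $PAP^{-1}$ and then re-applied in the reverse direction.
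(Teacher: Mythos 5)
Your proposal is correct and follows exactly the route the paper intends: the paper dismisses this corollary as ``obvious'' from the Lemma together with the closure of permutation matrices under inversion, and your argument simply spells out those details (the Lemma with $Q:=P^{-1}$ for the MDS part, the telescoping computation $(PAP^{-1})^2 = PA^2P^{-1} = I$ for the involutory part, and the symmetric application to $B = PAP^{-1}$ for the converse). Nothing further is needed.
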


\begin{corollary}\label{transpose}
A matrix $A$ is involutory and MDS if and only if $A^T$ is involutory and MDS.
\end{corollary}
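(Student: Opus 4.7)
The plan is to decompose the biconditional into its two constituent properties, MDS and involutory, and verify each independently. For the MDS side, I would simply invoke the preceding lemma, which already asserts that $A$ is MDS implies $A^T$ is MDS; the reverse implication follows by applying the lemma to $A^T$ in place of $A$ and using $(A^T)^T = A$. So no new work is required for this half.

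For the involutory side, my plan is to use the elementary identity $(A^T)^2 = (A^2)^T$, which is a direct consequence of the rule $(AB)^T = B^T A^T$ applied with $B = A$. Assuming $A^2 = I$, transposing both sides yields $(A^T)^2 = I^T = I$, so $A^T$ is involutory; the converse follows by the identical argument applied to $A^T$ (again using $(A^T)^T = A$). Combining this with the MDS half gives the corollary in both directions.

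I do not anticipate any substantive obstacle: each half reduces to a one-line observation, which is precisely why the text flags the corollary as \emph{obvious}. The only thing worth being careful about is to state the biconditional symmetrically and to note explicitly that $P^{-1}$ and $(A^T)^T$ behave as expected, so that both directions of the ``if and only if'' are actually written out rather than glossed.
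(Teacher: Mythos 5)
Your proposal is correct and matches the paper's intended (and unwritten) argument: the paper declares the corollary obvious, relying on the preceding lemma for the MDS half and the trivial identity $(A^T)^2=(A^2)^T$ for the involutory half, exactly as you do. No discrepancy to report.
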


\section{Matrices of Order One, Two, and Three}
Obviously, there are at least one different element in an involutory MDS matrices of order 1, and $\begin{pmatrix}
1
\end{pmatrix}^2=\begin{pmatrix}
1
\end{pmatrix}$ 

For order two, obviously $\begin{pmatrix}
a&a\\a&a
\end{pmatrix}$ is not MDS - hence, any MDS matrices need at least two different elements. Now, for any $a\in \F_{2^m}, a\notin \{0,1\}$,  $\begin{pmatrix}
a&a+1\\a+1&a
\end{pmatrix}^2=\begin{pmatrix}
1&0\\0&1
\end{pmatrix}$.  Hence, at least two different elements are needed in an involutory MDS matrices. Furthermore, any involutory MDS matrices over $\F_{2^m}$ with order two and exactly two different elements is of the form $\begin{pmatrix}
a&a+1\\a+1&a
\end{pmatrix}$ - hence, there are exactly $2^m-2$ matrices in this case.

The case of order three is more involved. First, any MDS matrices in this order has at least two different elements. Suppose there was an involutory MDS matrix which have exactly two different elements, denoted by $a$ and $b$. Combinatorial reasoning gives two possible families of MDS matrices, up to the transformation described in corollary \ref{paap}, - $\begin{pmatrix}
a&a&b\\a&b&a\\b&a&a
\end{pmatrix}$ and $\begin{pmatrix}
a&a&b\\b&a&a\\a&b&a
\end{pmatrix}$. Choosing appropriate $a$ and $b$ (such that none of the square submatrices are singular) results in an MDS matrix with exactly two different entries. But, were one of these matrices be involutory, by checking each entries, it can be inferred that one of $a$ and $b$ is 0 - a contradiction. Hence, at least three different entries is needed in an involutory MDS matrices over $\F_{2^m}$. Examples of an involutory MDS matrices (over $\F_{2^3}$) with exactly three different entries can be seen at G{\"u}zel et. al \cite{guzel}.

\section{Matrices of Order Four}

Junod and Vaudenay \cite{junod} have already proved that any MDS matrices over $\F_{2^m}$ with order 4 has at least three different entries, with lower bound attained (as example) from matrices used in AES \cite{junod}. However, this matrix is not involutory.

Now suppose $A$ is an involutory MDS matrices with exactly three different entries $a,b,c\in \F_{2^m}$. Three successive claims (and sub-claims) are proven to classify the structure of $A$.

\begin{claim}\label{first}
$a,b,$ and $c$ appear at most twice in any row, or column, of $A$.
\end{claim}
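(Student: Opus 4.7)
The plan is to combine two structural facts about an involutory MDS matrix $A$ over $\F_{2^m}$. The first is a global row/column sum identity: since $A^2 = I$ gives $\sum_k A_{ik}^2 = 1$, and the Frobenius map is additive in characteristic two, one would derive $(\sum_k A_{ik})^2 = \sum_k A_{ik}^2 = 1$, and hence $\sum_k A_{ik} = 1$ for every row (and similarly $1$ for every column, either directly or via Corollary \ref{transpose}). The second is the MDS consequence that if row $r$ contains the same entry $a$ at two distinct columns $j$ and $k$, then for any other row $i$ the $2\times 2$ submatrix on rows $r,i$ and columns $j,k$ has determinant $a(A_{ij} + A_{ik})$, so $A_{ij} \neq A_{ik}$.

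Using the row-sum identity, the case where an element $a$ appears four times in one row is immediate: its row sum would be $4a = 0 \neq 1$, a contradiction.

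For the delicate case where an element $a$ appears exactly three times in some row $r$, say at columns $j_1, j_2, j_3$, with the remaining entry $x$ sitting in column $j_4$, the row-sum identity would force $x = a+1$; relabel this value as $b$ and call the third entry used in the matrix $c$. Applying the MDS consequence pairwise to the three $a$-columns would force every other row to have three pairwise distinct entries in columns $j_1, j_2, j_3$, which being three values drawn from $\{a, b, c\}$ must realize the full set $\{a, b, c\}$. The row-sum identity then pins the column-$j_4$ entry of each such row to $1 + a + b + c = c$ (using $b = a+1$). Column $j_4$ would therefore contain three $c$'s and one $b$, and imposing the column-sum identity would give $b + c = 1$; substituting $b = a+1$ then yields $c = a$, contradicting the distinctness of $a, b, c$.

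The statement for columns follows by applying the same argument to $A^T$, which is involutory MDS by Corollary \ref{transpose}. The main obstacle, in my view, is spotting the row- and column-sum identity forced by involutivity in characteristic two; once that identity is available, the rest is arithmetic bookkeeping glued together by the MDS-induced distinctness in the $2\times 2$ submatrices.
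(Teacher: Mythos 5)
Your argument hinges on the identity $\sum_k A_{ik} = 1$ for every row, but the derivation of that identity is wrong: the $(i,i)$ entry of $A^2$ is $\sum_k A_{ik}A_{ki}$, the product of row $i$ with \emph{column} $i$, not $\sum_k A_{ik}^2$. The two coincide only when $A$ is symmetric, which is not assumed. The identity itself is false for general involutory matrices over $\F_{2^m}$: the matrix $\bigl(\begin{smallmatrix}1&0\\t&1\end{smallmatrix}\bigr)$ squares to the identity in characteristic two but has second-row sum $t+1$; and even restricting to involutory MDS matrices does not save it, since conjugating an involutory MDS matrix by a nonscalar diagonal matrix preserves both properties while changing the row sums. (Nor may you assume it for involutory MDS matrices with exactly three entries --- their nonexistence is what is being proved.) Everything downstream collapses without it: the deduction $x=a+1$ in the three-repeats case, the forced value $c$ in column $j_4$, and the final column-sum contradiction. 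The four-repeats case survives, but only because it can be rescued by the pigeonhole/MDS observation you already have (four pairwise distinct entries in another row cannot be drawn from three values), not by the computation $4a=0\neq 1$.

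The portion of your proposal that does not invoke the sum identity is sound and agrees with the paper: the $2\times 2$ minor $a(A_{ij}+A_{ik})$ forces every other row to be pairwise distinct on the three repeated columns, hence to realize all of $\{a,b,c\}$ there. The paper then finishes with the genuine involutivity equations rather than a global sum: for the first row $(b,a,a,a)$, the relations $(A^2)_{1i}=ba+a(A_{2i}+A_{3i}+A_{4i})=0$ for $i=2,3,4$ give $\sum_{i,j\ge 2}A_{ij}=b$, while the MDS structure gives $\sum_{i,j\ge 2}A_{ij}=3(a+b+c)=a+b+c$, whence $a=c$; a separate, longer case analysis handles the row $(a,a,a,b)$, where the non-repeated entry sits in the column being multiplied. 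To keep your cleaner bookkeeping you would first need an independent proof of a row-sum statement; as written, the proof does not go through.
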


\begin{proof}
Without loss of generality, and considering corollary \ref{paap} and \ref{transpose}, it is sufficient to prove they appear at most twice in $A$'s first row.

Suppose there exists a matrix $A$ that satisfy the initial condition such that its first row contains $a$ more than twice. By pigeonhole principle (applied over second row), $a$ must not appear four times. Hence, $a$ appears exactly thrice. Let another entry in $A$'s first row be $b$. By applying corollary \ref{paap}, there are only two form of $A$'s first row that needs to be considered: $\begin{pmatrix}
b&a&a&a
\end{pmatrix}$ and  $\begin{pmatrix}
a&a&a&b
\end{pmatrix}$.

\paragraph{First Case: $\begin{pmatrix}
b&a&a&a
\end{pmatrix}$.} We look at submatrix of $A$ made by removing the first column and row of $A$. Each entry in each row of this submatrix is pairwise different, because $A$ is MDS. Hence, for $i=2,3,4$, $\{A_{i2},A_{i3},A_{i4}\}=\{a,b,c\}$ and $A_{i2}+A_{i3}+A_{i4}=a+b+c$. Hence, $\displaystyle \sum_{i=2}^{4}\sum_{j=2}^{4} A_{ij} = 3(a+b+c)=a+b+c$.

Meanwhile, considering  $(A^2)_{1i}$ for $i=2,3,4$, we get $ba+aA_{2i}+aA_{3i}+aA_{4i}=0 \implies A_{2i}+A_{3i}+A_{4i}=b$. Adding all equation, $\displaystyle \sum_{i=2}^{4}\sum_{j=2}^{4} A_{ij} = b$. Hence, $a+b+c=b\implies a=c$, a contradiction.

\paragraph{Second Case:$\begin{pmatrix}
a&a&a&b
\end{pmatrix}$.} By the same argument as the first case, for $i=2,3,4$,  $\{A_{i1},A_{i2},A_{i3}\}=\{a,b,c\}$ and $A_{i1}+A_{i2}+A_{i3}=a+b+c$. Now, for $j=2,3,4$, if $A_{j4}=b$, $A$ has $\begin{pmatrix}
a&b\\a&b
\end{pmatrix}$ as submatrix - contradicting $A$ being MDS. Hence, $A_{j4}\in \{a,c\}$. By looking at $A_{24}$ and $A_{34}$, there are two cases to be considered:
\begin{itemize}
\item Let $A_{24}=A_{34}=x$. By considering $(A^2)_{14}$, we get $ab+aA_{24}+aA_{34}+bA_{44}=0 \implies ab=bA_{44} \implies A_{44}=a$.

Then, considering $(A^2)_{24}+(A^2)_{34}$, we get $(A_{21}+A_{31})b+x(A_{22}+A_{23}+A_{32}+A_{33})=0$. Because $A_{i1}+A_{i2}+A_{i3}=a+b+c$ for $i=2,3,4$, this equation is equivalent to $(A_{21}+A_{31})b+x[(a+b+c+A_{21})+(a+b+c+A_{31})]=0$ and $(A_{21}+A_{31})(b+x)=0$.

Since $x\neq b$, $A_{21}=A_{31}$. But, the submatrix $\begin{pmatrix}
A_{21}&x\\A_{31}&x
\end{pmatrix}$ is singular, a contradiction.
\item $A_{24}\neq A_{34}$. Either $(A_{24},A_{34})=(a,c)$ or $(c,a)$. In both cases, by considering $(A^2)_{14}$, we get $ab+aa+ac+bA_{44}=0$. $A_{44}=a$ results in $a=0$ or $a=c$, and $A_{44}=c$ results in $a=b$ or $a=c$, a contradiction.
\end{itemize}

Because all cases leads to contradictions, the first statement must be true. Hence, the claim is proven.
\end{proof}

\begin{claim}\label{second}
Each row and column of $A$ must contain $a,b,$ and $c$.
\end{claim}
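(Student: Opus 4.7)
The plan is to mimic the approach of Claim \ref{first}: argue by contradiction, then combine MDS submatrix conditions with the scalar equations $(A^2)_{1j}=\delta_{1j}$, and derive either an entry equal to $0$ or a forced coincidence among $a,b,c$.

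Assume that some row of $A$ fails to contain all three values. By Corollary \ref{paap} it may be taken to be the first row; Claim \ref{first} then forces the two values that do appear to each occur exactly twice. Renaming so that $A_{11}=a$ and the other present value is $b$, and conjugating by a permutation matrix that fixes index $1$ and permutes $\{2,3,4\}$ to sort the remaining three entries of row $1$, I may assume the canonical first row $(a,a,b,b)$. (Note that the alternative shape $A_{11}=b$ reduces to this one after swapping the roles of $a$ and $b$ and then conjugating by the transposition $(2\;4)$, so no separate case is needed.)

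With this first row, MDS applied to the column pairs $\{1,2\}$ and $\{3,4\}$ immediately forces $A_{i1}\neq A_{i2}$ and $A_{i3}\neq A_{i4}$ for $i=2,3,4$, because a $2\times 2$ submatrix with equal top entries would be singular; MDS on the mixed pairs $\{1,3\},\{1,4\},\{2,3\},\{2,4\}$ gives further proportionality restrictions such as $aA_{i3}\neq bA_{i1}$. Separately, expanding the four equations $(A^2)_{1j}=\delta_{1j}$ gives, for instance, $a^2+aA_{22}+b(A_{32}+A_{42})=0$ for $j=2$ and $ab+aA_{23}+b(A_{33}+A_{43})=0$ for $j=3$; pairing the $j=1,2$ equations and the $j=3,4$ equations cancels the diagonal $a^2$ and $ab$ terms in characteristic two and yields compact linear identities relating the row sums of rows $2$-$4$ restricted to various pairs of columns.

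Combining these linear identities with Claim \ref{first} applied column-wise (so that each of $a,b,c$ appears at most twice in every column), I would enumerate the admissible multisets of entries for rows $2$-$4$ and show that every configuration forces either some entry to equal $0$---contradicting MDS on $1\times 1$ submatrices---or two of $a,b,c$ to coincide. The main obstacle, compared with Claim \ref{first}, is that deleting the first row and column no longer yields a submatrix whose rows are forced to be complete $\{a,b,c\}$ permutations, so the clean shortcut ``row sums equal $a+b+c$'' from the earlier proof is unavailable. The case analysis must therefore lean more heavily on the paired involutory equations, on the column-side form of Claim \ref{first}, and on pairwise MDS inequalities, producing a longer but still elementary check in the spirit of Claim \ref{first}.
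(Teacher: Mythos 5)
Your setup is sound and matches the paper's opening moves: argue by contradiction, use Corollary \ref{paap} to normalize the first row to a pattern in which two of the three values each appear exactly twice (the paper takes $(a,b,b,a)$, you take $(a,a,b,b)$ --- equivalent under a permutation fixing index $1$), handle $A_{11}=b$ by relabelling, and then play the involution equations against the nonsingularity of $2\times 2$ submatrices. All of that is correct and is essentially how the paper begins.

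However, the argument stops exactly where the real work begins. The sentence ``I would enumerate the admissible multisets of entries for rows $2$--$4$ and show that every configuration forces either some entry to equal $0$ or two of $a,b,c$ to coincide'' \emph{is} the claim; it is asserted, not proved. In the paper this enumeration occupies about a page and is not routine: it requires an auxiliary sub-claim (in the paper's normalization, that neither $A_{42}$ nor $A_{43}$ equals $b$) whose proof draws on involution equations from rows other than the first --- $(A^2)_{42}$, $(A^2)_{43}$, $(A^2)_{32}$, $(A^2)_{33}$ --- whereas your stated toolkit consists only of the four first-row equations $(A^2)_{1j}$, column-wise Claim \ref{first}, and pairwise MDS inequalities. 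Only after that sub-claim do the key identities $b^2=ac$ and $a^2=bc$ fall out of $(A^2)_{12}$ and $(A^2)_{13}$, and the final contradiction comes from the forced singularity of the block $\begin{pmatrix}A_{22}&A_{23}\\A_{32}&A_{33}\end{pmatrix}$, a step absent from your outline. (A further small inaccuracy: pairing the $j=1$ and $j=2$ equations cancels the $a^2$ terms but leaves $1$, not $0$, on the right-hand side, since $(A^2)_{11}=1$.) You correctly identify that the shortcut from Claim \ref{first} is unavailable here, but identifying the obstacle is not the same as overcoming it; until each branch of the enumeration is actually shown to terminate in a contradiction, this is a plan rather than a proof.
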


\begin{proof}
By the same argument as the last claim, it is sufficient to prove the first row of $A$ contains all of them. Suppose it is not. Without loss of generality (and by the last claim), let $a$ and $b$ appear twice in $A$'s first row, with $A_{11}=a$. By corollary \ref{paap}, let the first row of $A$ be $\begin{pmatrix}
a&b&b&a
\end{pmatrix}$. By considering the fourth row, we get $A_{42}\neq A_{43}$. Then, we exclude some possible values that they can take.

\begin{claim}
Neither $A_{42}$ nor $A_{43}$ are $b$.
\end{claim}

\begin{proof}
Suppose the claim is false, and without loss of generality, let $A_{42}=b$. Considering $(A^2)_{12}$, we get $ab+bA_{22}+bA_{32}+ab=0 \implies A_{22}=A_{32}=x$. From claim \ref{first}, $x\neq b$.

Now, consider the fourth row. Were $A_{41}$ or $A_{44}$ be $a$, $A$ would have singular submatrix - a contradiction. With the same reasoning, $A_{41}\neq A_{44}$. Hence $(A_{41},A_{44})=(b,c)$ or $(c,b)$ and $A_{41}+A_{44}=b+c$.

On the other hand, considering $(A^2)_{42}$, we get $A_{41}b+bx+A_{43}x+bA_{44}=0 \implies b(b+c)+(b+A_{43})x=0$. If $A_{43}=c$, $x=b$. If $A_{43}=b, b=0$ or $b=c$. Both leads to contradictions - hence, $A_{43}=a$.

Considering $(A^2)_{13}$, we get $ab+b(A_{23}+A_{33})+a^2=0$. If $(A_{23},A_{33})$ is a permutation of $(a,b)$, the last equation is equivalent to $a^2+b^2=0\implies a=b$ - a contradiction. If it is a permutation of $(a,c)$, the equation implies $a^2+bc=0$. Now, consider submatrix of $A$ constructed by taking first and fourth row, and taking third column and either of first and fourth column (depending of which of $A_{41}$ and $A_{44}$ is $c$). This submatrix is a permutation of $\begin{pmatrix}
b&a\\a&c
\end{pmatrix}$. Then, $A$ would have a singular submatrix - a contradiction.  Hence, $(A_{23},A_{33})$ is a permutation of $(b,c)$, and $ab+bc+b^2+a^2=0 \label{important}$.

Back to $(A^2)_{42}$, where we get $b(b+c)+(b+a)x=0$. If $x=c$, this equation is equivalent to $b^2+ac=0$. Combining with the last equation, we get $(a+b)(a+c)=0$ and $a=b$ or $a=c$ - a contradiction. Hence $x=a$.

Considering $(A^2)_{43}$, $bA_{41}+A_{23}b+A+A_{33}a+aA_{44}=0$. By the previous observations, $A_{23}=A_{44}$ or $A_{23}=A_{41}$. Were the first be true, $A_{33}=A_{41}$ and $(a+b)(A_{23}+A_{33})=0$. This implied $A_{23}=A_{33}$ or $a=b$, a contradiction. Hence, $A_{23}=A_{41}=p$ and $A_{33}=A_{44}=q$, for $(p,q)=(b,c)$ or $(c,b)$.

Considering $(A^2)_{32}$, $a(a+A_{33})+b(A_{31}+A_{34})=0$. Were $A_{31}=A_{34}$, $a=A_{33}$; a contradiction. Hence, they are different. If they are permutation of $(a,b)$, the equation is equivalent to $a^2+b^2+ab+aA_{33}=0$. But, from the previous paragraph, $ab+bc+b^2+a^2=0$; hence $bc=aA_{33}$ and $A_{33}=a$ - a contradiction. If they are permutation of $(a,c)$, the equation is equivalent to $a^2+ab+bc+aA_{33}=0\implies aA_{33}=b^2 \implies A_{33}=c$. But, $a^2+ab+bc+ac=0=(a+b)(a+c)\implies a=b$ or $a=c$, a contradiction. Hence, $(A_{31},A_{33})$ must be a permutation of $(b,c)$. This implies $a^2+b^2+bc+aA_{33}=0$ and $A_{33}=b=q$. Hence, $p=c$ and $A=\begin{pmatrix}
a&b&b&a\\A_{21}&a&c&A_{24}\\A_{31}&a&b&A_{34}\\c&b&a&b
\end{pmatrix}$.

By analyzing submatrices with order 2 of $A$, $(A_{31},A_{34})=(b,c)$. But, from $(A^2)_{33}$, $1=b^2+ac+b^2+ac=0$; a contradiction. Hence the initial assumption is false, and this claim is proved.
\end{proof}

From the claim, $(A_{42},A_{43})$ must be a permutation of $(a,c)$. Without loss of generality, let $(A_{42},A_{43})=(c,a)$.

From $(A^2)_{12}$, $ab+b(A_{22}+A_{32})+ac=0$. It can be seen that $A_{22}\neq A_{32}$. Furthermore, were $(A_{22},A_{32})$ be a permutation of $(b,c)$, the equation is equivalent to $(b+a)(b+c)=0$ and $a=b$ or $b=c$, a contradiction. Were they be a permutation of $(a,c)$, the equation implies $c(a+b)=0$ and $a=b$, also a contradiction. Hence they are permutation of $(a,b)$ and $b^2=ac$.

Now, from $(A^2)_{13}$, $ab+b(A_{23}+A_{33})+a^2=0$. It can be seen that $A_{23}\neq A_{33}$.  Furthermore, were $(A_{23},A_{33})$ be a permutation of $(a,b)$, the equation is equivalent to $a^2+b^2=0\implies a=b$, a contradiction. Were it be a permutation of $(b,c)$, the equation is equivalent to $ab+b^2+bc+a^2=0\implies (a+b)(a+c)=0\implies a=b$ or $a=c$, a contradiction. Hence, they are a permutation of $(a,c)$. And, $a^2=bc$.

Now consider $\begin{pmatrix}
A_{22}&A_{23}\\A_{32}&A_{33}
\end{pmatrix}$, a submatrix of $A$ Notice that from all possible matrices that can be obtained by the restrictions imposed above, its determinant is zero. Hence, $A$ has a singular submatrix - a contradiction. Then, the initial assumption (that the first row of $A$ contains  $a$ and $b$ only) is false. Hence, the claim is proven.
\end{proof}

\begin{claim}\label{last}
For all $i=1,2,3,4$, $\{A_{i1},A_{i2},A_{i3},A_{i4}\}-\{A_{ii}\}=\{a,b,c\}$ and $\{A_{1i},A_{2i},A_{3i},A_{4i}\}-\{A_{ii}\}=\{a,b,c\}$.
\end{claim}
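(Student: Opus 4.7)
The plan is to combine Claims \ref{first} and \ref{second} to reduce this to a single configuration, then derive a contradiction by the involutory and MDS constraints, as in the previous claims. By Claim \ref{second} every row of $A$ contains each of $a,b,c$; by Claim \ref{first} no value appears more than twice in any row. Since each row has length four, exactly one of $a,b,c$ is doubled in each row and the other two appear once. The claim is then that this doubled value in row $i$ (resp.\ column $i$) is $A_{ii}$. By Corollaries \ref{paap} and \ref{transpose} it suffices to prove this for the first row.

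Suppose, for contradiction, that $A_{11}=a$ while the doubled value in the first row is some $b\neq a$. Conjugating $A$ by a permutation matrix that fixes position $1$ (still involutory MDS by Corollary \ref{paap}), I may assume without loss of generality that the first row is $(a,b,b,c)$. I would then expand the three involutory relations $(A^{2})_{1j}=0$ for $j=2,3,4$, use the row identity $A_{i1}+A_{i2}+A_{i3}+A_{i4}=A_{ii}+a+b+c$ supplied by Claim \ref{second}, and walk through the possible placements of $a,b,c$ in the remaining three rows dictated by the column constraints of Claim \ref{second}. Every surviving configuration should either force an equality among $a,b,c$, or produce a $2\times 2$ submatrix with vanishing determinant, contradicting the MDS property, exactly as in the proofs of Claims \ref{first} and \ref{second}.

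The principal obstacle is the bookkeeping of rows $2$, $3$, $4$. Because columns $2$ and $3$ already each carry $b$ in their first slot, the doubled value within those columns is constrained but not uniquely determined; the remaining three entries of column $2$ (and similarly column $3$) form one of the multisets $\{a,b,c\}$, $\{a,a,c\}$, or $\{a,c,c\}$, and column $4$ is constrained analogously with $c$ in its first slot. I expect the simultaneous use of the three column constraints from Claim \ref{second}, the three scalar involutory relations, and the nonsingularity of every $2\times 2$ submatrix to prune the surviving configurations down to a short list of sub-cases, each eliminated by a computation identical in flavour to those already carried out above. Once the row statement is established, the column statement follows by applying the same argument to $A^{T}$ via Corollary \ref{transpose}, completing the proof.
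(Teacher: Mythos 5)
Your reduction is exactly the paper's: by Claims \ref{first} and \ref{second} each row consists of $\{a,b,c\}$ together with one doubled value, the claim is that this doubled value is $A_{ii}$, and by Corollaries \ref{paap} and \ref{transpose} one may assume for contradiction that the first row is $\begin{pmatrix}a&b&b&c\end{pmatrix}$. The toolkit you list (the relations $(A^2)_{1j}=0$, the column constraints from Claim \ref{second}, and nonsingularity of $2\times 2$ submatrices) is also the one the paper uses. But from that point on you only announce that the case analysis ``should'' close; you never exhibit the configurations or the computations that eliminate them, so no contradiction is actually derived. The paper's argument, though short, has three non-obvious steps that your plan does not contain: first it proves $A_{22}\neq A_{32}$ and $A_{23}\neq A_{33}$ (if $A_{22}=A_{32}$ then $(A^2)_{12}=0$ collapses to $ab+cA_{42}=0$, forcing $A_{42}=c$, and a column analysis ends in $(b+a)(b+c)=0$); second, it shows that for $i=2,3$ the pair $(A_{2i},A_{3i})$ must be a permutation of $(a,c)$, because the alternatives yield $(b+c)^2=0$ or $(b+a)(b+c)=0$; third --- and this is the step that actually finishes the proof --- it adds the two relations $(A^2)_{12}+(A^2)_{13}=0$ to obtain $c(A_{42}+A_{43})=0$, hence $A_{42}=A_{43}$, which makes the submatrix $\begin{pmatrix}b&b\\A_{42}&A_{43}\end{pmatrix}$ on rows $1,4$ and columns $2,3$ singular. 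Without at least this final summation idea, the ``short list of sub-cases'' you hope for is not in hand.

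A second, concrete defect: the ``row identity'' $A_{i1}+A_{i2}+A_{i3}+A_{i4}=A_{ii}+a+b+c$ is not supplied by Claim \ref{second}. Claims \ref{first} and \ref{second} only give $\sum_j A_{ij}=a+b+c+v_i$ where $v_i$ is the doubled value of row $i$; identifying $v_i$ with $A_{ii}$ is precisely the statement you are trying to prove, so invoking this identity is circular. Indeed it is false for the first row under your own hypothesis: there the row sum is $a+c$, whereas $A_{11}+a+b+c=b+c$. You should drop that identity and replace it with the explicit column bookkeeping and the summation trick described above.
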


\begin{proof}
It is sufficient to prove the first equality in case $i=1$. Suppose this claim is false. Considering the previous claim and corollary \ref{paap}, without loss of generality let the first row of $A$ be $\begin{pmatrix}
a&b&b&c
\end{pmatrix}$.

First we assert that $A_{22}\neq A_{32}$ and $A_{23}\neq A_{33}$.
Suppose this assertion is false. Without loss of generality (by considering corollary \ref{paap}), let $A_{22}=A_{32}=x$.

Considering $(A^2)_{12}$, we get $ab+cA_{42}=0$. It can be seen that $A_{42}=c$. By claim \ref{second}, $x$ is neither $b$ nor $c$; hence it is $a$.

Now consider $A$'s third column. It can be seen that $A_{23}\neq A_{33}$; furthermore, none of them is $a$. Hence, they are permutation of $(b,c)$ and from the claim \ref{second}, $A_{43}=a$.

However, by looking at $(A^2)_{13}$, $ab+b^2+bc+ca=0\implies (b+a)(b+c)=0\implies b=a$ or $b=c$, a contradiction. Hence the initial assertion is true.

For $i=2,3$, consider $(A^2)_{1i}$. We get the equation $ab+bA_{2i}+bA_{3i}+cA_{4i}=0$. From claim \ref{second}, one of $A_{2i},A_{3i},A_{4i}$ is $c$. If the first two element is not $c$, then it can be estabilished that $ab+b^2+ba+c^2=0\implies b=c$, a contradiction. Hence, one of $A+{2i}$ and $A_{3i}$ is $c$. Assume the other element is $b$. From claim \ref{second}, $A_{4i}$ must be $a$. However, from the equation above, $ab+b^2+bc+ca=0=(b+a)(b+c) \implies b=a$ or $b=c$. So, for $i=2,3$, $A_{2i}+A_{3i}=a+c$.

However, $(ab+bA_{22}+bA_{32}+cA_{42})+(ab+bA_{23}+bA_{33}+cA_{43}=0 \implies c(A_{42}+A_{43})=0\implies A_{42}=A_{43}$. By considering the submatrix $\begin{pmatrix}
b&b\\A_{42}&A_{43}
\end{pmatrix}$, this contradicts $A$ being MDS. Hence, the initial assumption is wrong, and the claim is proven. 
\end{proof}

Now, without loss of generality, let the first row of $A$ be $\begin{pmatrix}
a&a&b&c
\end{pmatrix}$. Then, we observe the second column. From claim \ref{last}, $(A_{32},A_{42})=(b,c)$ or $(c,b)$. 

Assume $(A_{32},A_{42})=(b,c)$. By reasoning based on claim \ref{last}, $A_{43}=a$, $A_{23}=c$, $A_{41}=b$, $A_{21}=a$, and $A_{31}=c$. Now, by considering $(A^2)_{11}$, we get $1=a^2+a^2+bc+cb=0$, a contradiction. So, $(A_{32},A_{42})=(c,b)$. By considering $(A^2)_{12}$, we get $a^2+aA_{22}+bc+cb=0\implies A_{22}=a$.

Now we observe the third row. By claim \ref{last}, because $A_{32}=c$, either $(A_{31},A_{34})=(a,b)$ or $(b,a)$.

If $(A_{31},A_{34})=(a,b)$, by reasoning based on claim \ref{last}, $A_{24}=a$, $A_{23}=c$, $A_{21}=b$, $A_{41}=c$, and $A_{43}=a$. Now, by considering $(A^2)_{13}$, $ab+ac+bA_{33}+ac=0 \implies a=A_{33}$. However, by considering $(A^2)_{23}$, we get $b^2+ac+ca+a^2=0 \implies b=a$, a contradiction.

Hence, $(A_{31},A_{34})=(b,a)$. By considering $(A^2)_{32}$, we get $ab+ac+cA_{33}+ab=0 \implies a=A_{33}$. By considering $A$'s submatrix from first \& second row and first \& second column, we get $A_{21}\neq a$. By claim \ref{last}, $A_{21}=c$ and $A_{41}=a$. But, by considering $(A^2)_{31}$, we get $0=ab+c^2+ba+a^2 \implies c=a$, a contradiction. Hence, there are no involutory MDS matrices over $\F_{2^m}$ that has exactly, or less than, three different elements.

We conclude that for any natural $m$, any involutory MDS matrix $A$ with elements in $\F_{2^m}$ has at least four different elements. An example of involutory MDS matrices (over $\F_{2^8}$) that has exactly four different entries is used in \textit{Anubis} block cipher \cite{anubis}.

\section{Conclusion}

Any involutory MDS matrices over $\F_{2^m}$ of order three and four need (respectively) three and four different elements. This result extends the result from Junod and Vaudenay \cite{junod}, which proves an MDS matrices (not needed to be involutory) over $\F_{2^m}$ of order three and four need at least two and three different elements.

\begin{acknowledgments}
The author would like to thank Aleams Barra and  Intan Muchtadi-Alamsyah for providing valuable suggestions. This research is supported by Hibah Riset Dasar DIKTI 2019.
\end{acknowledgments}

\bibliography{ref}

\providecommand{\noopsort}[1]{}\providecommand{\singleletter}[1]{#1}%
\begin{thebibliography}{4}%
\makeatletter
\providecommand \@ifxundefined [1]{%
 \@ifx{#1\undefined}
}%
\providecommand \@ifnum [1]{%
 \ifnum #1\expandafter \@firstoftwo
 \else \expandafter \@secondoftwo
 \fi
}%
\providecommand \@ifx [1]{%
 \ifx #1\expandafter \@firstoftwo
 \else \expandafter \@secondoftwo
 \fi
}%
\providecommand \natexlab [1]{#1}%
\providecommand \enquote  [1]{``#1''}%
\providecommand \bibnamefont  [1]{#1}%
\providecommand \bibfnamefont [1]{#1}%
\providecommand \citenamefont [1]{#1}%
\providecommand \href@noop [0]{\@secondoftwo}%
\providecommand \href [0]{\begingroup \@sanitize@url \@href}%
\providecommand \@href[1]{\@@startlink{#1}\@@href}%
\providecommand \@@href[1]{\endgroup#1\@@endlink}%
\providecommand \@sanitize@url [0]{\catcode `\\12\catcode `\$12\catcode
  `\&12\catcode `\#12\catcode `\^12\catcode `\_12\catcode `\%12\relax}%
\providecommand \@@startlink[1]{}%
\providecommand \@@endlink[0]{}%
\providecommand \url  [0]{\begingroup\@sanitize@url \@url }%
\providecommand \@url [1]{\endgroup\@href {#1}{\urlprefix }}%
\providecommand \urlprefix  [0]{URL }%
\providecommand \Eprint [0]{\href }%
\providecommand \doibase [0]{http://dx.doi.org/}%
\providecommand \selectlanguage [0]{\@gobble}%
\providecommand \bibinfo  [0]{\@secondoftwo}%
\providecommand \bibfield  [0]{\@secondoftwo}%
\providecommand \translation [1]{[#1]}%
\providecommand \BibitemOpen [0]{}%
\providecommand \bibitemStop [0]{}%
\providecommand \bibitemNoStop [0]{.\EOS\space}%
\providecommand \EOS [0]{\spacefactor3000\relax}%
\providecommand \BibitemShut  [1]{\csname bibitem#1\endcsname}%
\let\auto@bib@innerbib\@empty
\bibitem [{\citenamefont {Gupta}\ \emph {et~al.}(2019)\citenamefont {Gupta},
  \citenamefont {Pandey}, \citenamefont {Ray},\ and\ \citenamefont
  {Samanta}}]{gupta}%
  \BibitemOpen
  \bibfield  {author} {\bibinfo {author} {\bibfnamefont {K.~C.}\ \bibnamefont
  {Gupta}}, \bibinfo {author} {\bibfnamefont {S.}~\bibnamefont {Pandey}},
  \bibinfo {author} {\bibfnamefont {I.~G.}\ \bibnamefont {Ray}}, \ and\
  \bibinfo {author} {\bibfnamefont {S.}~\bibnamefont {Samanta}},\ }\bibfield
  {title} {\enquote {\bibinfo {title} {Cryptographically significant {MDS}
  matrices in finite fields: A brief survey and some generalized results},}\
  }\href@noop {} {\bibfield  {journal} {\bibinfo  {journal} {Advances in
  Mathematics of Communications}\ }\textbf {\bibinfo {volume} {13}},\ \bibinfo
  {pages} {779--843} (\bibinfo {year} {2019})}\BibitemShut {NoStop}%
\bibitem [{\citenamefont {Junod}\ and\ \citenamefont {Vaudenay}(2005)}]{junod}%
  \BibitemOpen
  \bibfield  {author} {\bibinfo {author} {\bibfnamefont {P.}~\bibnamefont
  {Junod}}\ and\ \bibinfo {author} {\bibfnamefont {S.}~\bibnamefont
  {Vaudenay}},\ }\bibfield  {title} {\enquote {\bibinfo {title} {Perfect
  diffusion primitives for block ciphers building efficient {MDS} matrices},}\
  }\href@noop {} {\bibfield  {journal} {\bibinfo  {journal} {Lecture Notes in
  Computer Science}\ }\textbf {\bibinfo {volume} {3357}},\ \bibinfo {pages}
  {84--99} (\bibinfo {year} {2005})}\BibitemShut {NoStop}%
\bibitem [{\citenamefont {G{\"u}zel}\ \emph {et~al.}(2019)\citenamefont
  {G{\"u}zel}, \citenamefont {Sakall\i}, \citenamefont {Akleylek},
  \citenamefont {\c{C}engellenmi\c{s}},\ and\ \citenamefont {Rijmen}}]{guzel}%
  \BibitemOpen
  \bibfield  {author} {\bibinfo {author} {\bibfnamefont {G.~G.}\ \bibnamefont
  {G{\"u}zel}}, \bibinfo {author} {\bibfnamefont {M.}~\bibnamefont {Sakall\i}},
  \bibinfo {author} {\bibfnamefont {S.}~\bibnamefont {Akleylek}}, \bibinfo
  {author} {\bibfnamefont {Y.}~\bibnamefont {\c{C}engellenmi\c{s}}}, \ and\
  \bibinfo {author} {\bibfnamefont {V.}~\bibnamefont {Rijmen}},\ }\bibfield
  {title} {\enquote {\bibinfo {title} {A new matrix form to generate all
  $3\times 3$ involutory {MDS} matrices over ${F}_{2^m}$},}\ }\href@noop {}
  {\bibfield  {journal} {\bibinfo  {journal} {Information Processing Letters}\
  }\textbf {\bibinfo {volume} {147}},\ \bibinfo {pages} {61--68} (\bibinfo
  {year} {2019})}\BibitemShut {NoStop}%
\bibitem [{\citenamefont {Barreto}\ and\ \citenamefont
  {Rijmen}(2000)}]{anubis}%
  \BibitemOpen
  \bibfield  {author} {\bibinfo {author} {\bibfnamefont {P.~S. L.~M.}\
  \bibnamefont {Barreto}}\ and\ \bibinfo {author} {\bibfnamefont
  {V.}~\bibnamefont {Rijmen}},\ }\bibfield  {title} {\enquote {\bibinfo {title}
  {The {ANUBIS} block cipher},}\ }\href@noop {} {\bibfield  {journal} {\bibinfo
   {journal} {Submission for {NESSIE} (New European Schemes for Signatures,
  Integrity and Encryption). Available at
  https://www.cosic.esat.kuleuven.be/nessie/ workshop/submissions.html.}\ }
  (\bibinfo {year} {2000})}\BibitemShut {NoStop}%
\end{thebibliography}%
\end{document}